\documentclass[12pt]{article}
\setlength{\oddsidemargin}{1cm}
\setlength{\evensidemargin}{1cm}
\textheight 8.7 in
\textwidth 5.5 in
\usepackage{color}
\usepackage{xcolor, soul}
\sethlcolor{gray}  
\usepackage{amsmath, amssymb, amsthm}
\usepackage{bm}
\usepackage{braket}
\newtheorem{theorem}{Theorem}
\newtheorem{lemma}{Lemma} 

\newtheorem{proposition}{Proposition} 
\newtheorem{corollary}{Corollary}
\newtheorem{remark}{Remark}
\newtheorem*{proof*}{Proof}
\usepackage{subcaption}
\usepackage{graphicx}
\usepackage{float}
\newcommand{\figwidth}{0.7}

\usepackage{hyperref}
\hypersetup{
setpagesize=false,
 colorlinks=true,%
 linkcolor=magenta,
 citecolor=blue,
}
\begin{document}
\title{{\bf IPS/Zeta Correspondence \\
            for the Domany-Kinzel model}
    \vspace{20mm}}
\author{
    Chusei KIUMI\\
    Graduate School of Engineering Science \\
    Yokohama National University \\
    Hodogaya, Yokohama, 240-8501, JAPAN \\
    e-mail: kiumi-chusei-bf@ynu.jp \\
    \\
    Norio KONNO \\
    Department of Applied Mathematics, Faculty of Engineering \\
    Yokohama National University \\
    Hodogaya, Yokohama, 240-8501, JAPAN \\
    e-mail: konno-norio-bt@ynu.ac.jp \\
    \\
    Yuki OSHIMA\\
    Graduate School of Engineering Science \\
    Yokohama National University \\
    Hodogaya, Yokohama, 240-8501, JAPAN \\
    e-mail: oshima-yuki-ch@ynu.jp
}

\date{\empty}
\maketitle
\clearpage
\begin{abstract}
    Previous studies presented zeta functions by the Konno-Sato theorem or the Fourier analysis for one-particle models, including random walks, correlated random walks, quantum walks, and open quantum random walks. Furthermore, the zeta functions for the multi-particle model with probabilistic or quantum interactions, called the interacting particle system (IPS), were also investigated. In this paper, we focus on the zeta function for a class of IPS, including the Domany-Kinzel model, which is a typical model of the probabilistic IPS in the field of statistical mechanics and mathematical biology.
\end{abstract}

\vspace{10mm}

\begin{small}
    \par\noindent
    {\bf Keywords}: Zeta function, Interacting Particle System, Domany-Kinzel model.
\end{small}
\begin{small}
    \par\noindent
    \par\noindent
    {\bf 2020 Mathematics Subject Classification.} Primary 82C22; Secondary 82C05, 15A15, 82C10.
\end{small}
\vspace{10mm}

\section{Introduction}\label{sec01}
In this paper, we study the interacting particle system (IPS), which is known as the multi-particle model, including the probabilistic cellular automata (PCA) and quantum cellular automata (QCA). Our main focus is the IPS-type zeta function for the IPS model, including the Domany-Kinzel model (DK model) as a special case. The DK model was introduced by Domany and Kinzel (1984) \cite{DK} to describe the phase transition phenomenon. It is defined by two parameters $p$ and $q$ $(0\leq p,q\leq 1)$, and it has oriented bond percolation and oriented site percolation as special cases, which are essential models in the field of statistical mechanics and mathematical biology.

Furthermore, a relation between the IPS and zeta function is called IPS/Zeta Correspondence, and it is introduced in Komatsu, Konno, and Sato~\cite{IPS}. The zeta correspondence is studied for various other models~\cite{SWZ,FCTM,FGZ,FVF}. In particular, Walk/Zeta Correspondence~\cite{SWZ} is based on the one-particle system in the lattice space; on the other hand, our focus is on the multi-particle system in the configuration space. This paper also pays attention to the DK model, which is not addressed on~\cite{IPS}.

In Section \ref{sec02}, we give definitions for the multi-particle system, DK model and IPS-zeta function. Section \ref{sec03} is devoted to the analysis of the zeta function on the class of IPS. Moreover, our main results are presented. Finally, Section \ref{sec04} gives a summary.
\section{Definition}
\label{sec02}
\subsection{Multi-particle system}
\label{sec02-1}
In this section, we consider a multi-particle system. First, we define the $N$-site path space by $ \mathbb{P}_{N} =\{0,1,\cdots ,N-1\}\ (N\geq 2)$. For each position $ x\in \mathbb{P}_{N}$, we give a state $\eta (x) \in \{0,1\}$. Here, we can interpret $\eta (x) =1$ to mean that the particle exists at site $x\in \mathbb{P}_{N}$ and $\eta (x) =0$ to mean that the $x$ is vacant. The configuration space is given as $ \{0,1\}^{\mathbb{P}_{N}}$ with $ 2^{N}$ elements and a configuration $ \eta =(\eta (0) ,\eta (1) ,\dotsc ,\eta (N-1)) \in \{0,1\}^{\mathbb{P}_{N}}$ denotes the existence or non-existence of particles in the path space. For instance, $(0,0,1)\in\{0,1\}^{\mathbb{P}_3}$ implies that the particle exists at position $2$ but does not exist at positions $0$ and $1$. We can also express these states with $\ket{0}$ and $\ket{1}$ which form an orthonormal basis of $\mathbb{C}^2$:
\begin{align*}
     & \ket{0}
    =\begin{bmatrix}
         1 \\
         0
     \end{bmatrix},
    \qquad
    \ket{1}
    =\begin{bmatrix}
         0 \\
         1
     \end{bmatrix}.
\end{align*}
Here, $\mathbb{C}$ denotes the set of complex numbers. Then, a configuration $\eta=(\eta(0),\eta(1),\dots , \eta(N-1))$ can be written as $\ket{\eta(0)}\ket{\eta(1)}\cdots\ket{\eta(N-1)}$, which is an abbreviated notation for the tensor product of the states, i.e., $\ket{\eta(0)}\otimes\ket{\eta(1)}\otimes\cdots\otimes\ket{\eta(N-1)}$. For example, a configuration $(0,1,0)\in\{0,1\}^{\mathbb{P}_3}$ is expressed as
\begin{align*}
     & \ket{0}\ket{1}\ket{0}
    =\ket{0}
    \otimes
    \ket{1}
    \otimes
    \ket{0} =\begin{bmatrix}
                 1
                 \\
                 0
             \end{bmatrix}
    \otimes \begin{bmatrix}
                0
                \\
                1
            \end{bmatrix}
    \otimes \begin{bmatrix}
                1 \\
                0
            \end{bmatrix}
    =\begin{bmatrix}
         1 \\
         0
     \end{bmatrix}
    \otimes
    \begin{bmatrix}
        0 \\
        0 \\
        1 \\
        0
    \end{bmatrix} =
    \begin{bmatrix}
        0 \\
        0 \\
        1 \\
        0 \\
        0 \\
        0 \\
        0 \\
        0
    \end{bmatrix} \in \mathbb{C}^{2^{3}} .
\end{align*}

Next, we introduce two types of operators. The first is the \textit{local} operator $ Q^{(l)}$, which is a $4\times 4$ matrix given by
\begin{align*}
    Q^{(l)} &
    =\begin{bmatrix}
         a_{00}^{00} & a_{00}^{01} & a_{00}^{10} & a_{00}^{11} \\
         a_{01}^{00} & a_{01}^{01} & a_{01}^{10} & a_{01}^{11} \\
         a_{10}^{00} & a_{10}^{01} & a_{10}^{10} & a_{10}^{11} \\
         a_{11}^{00} & a_{11}^{01} & a_{11}^{10} & a_{11}^{11}
     \end{bmatrix} ,
\end{align*}
where $ a_{kl}^{ij} \in \mathbb{C} \ (i,j,k,l\in \{0,1\})$. Here, $ a_{kl}^{ij}$ means the \textit{transition weight} from $ (\eta _{n}(x) ,\eta _{n}(x+1)) =(i,j)$ to $ (\eta _{n+1}( x) ,\eta _{n+1}(x+1)) =(k,l)$ for any $ x=0,1,\dotsc ,N-2$. For example, $ a_{11}^{01}$ means the \textit{transition weight} from $ (\eta _{n}( x) ,\eta _{n}(x+1)) =(0,1)$ to $ (\eta _{n+1}( x) ,\eta _{n+1}(x+1)) =(1,1)$. Note that we impose the important condition $ a_{kl}^{ij} =0\ (j\neq l)$ throughout this paper. This condition implies that $Q^{(l)}$ acting on $ (\eta _{n}( x) ,\eta _{n}(x+1))$ does not change the state on the right site, meaning $\eta _{n+1}(x+1)=\eta _{n}(x+1)$. Thus, $ Q^{(l)}$ becomes
\begin{align*}
    Q^{(l)} &
    =\begin{bmatrix}
         a_{00}^{00} & \cdot       & a_{00}^{10} & \cdot       \\
         \cdot       & a_{01}^{01} & \cdot       & a_{01}^{11} \\
         a_{10}^{00} & \cdot       & a_{10}^{10} & \cdot       \\
         \cdot       & a_{11}^{01} & \cdot       & a_{11}^{11}
     \end{bmatrix} .
\end{align*}
Here, $``\cdot "$ represents $0$. We can see from the next discussion about the $global$ operator that this condition also tells us that the state of the site $x$ at time $n+1$ of the model is determined only by the state of the site $x,\ x+1$ at time $n$. In particular, if $a_{kl}^{ij} \in \{0,1\}$, then this model is called the \textit{cellular automaton} (CA).
Also, if $Q^{(l)}$ is a transposed \textit{stochastic matrix}, which means
\begin{align*}
    a_{00}^{00} +a_{10}^{00} & =a_{01}^{01} +a_{11}^{01} =a_{00}^{10} +a_{10}^{10} =a_{01}^{11} +a_{11}^{11} =1,\ \ \ a_{kl}^{ij} \in [ 0,1],
\end{align*} the model is called the \textit{probabilistic cellular automata} (PCA). Furthermore, when $Q^{(l)}$ is unitary, i.e.,
\begin{align*}
    |a_{00}^{00} |^{2} +|a_{10}^{00} |^{2}
     & =|a_{01}^{01} |^{2} +|a_{11}^{01} |^{2} =|a_{00}^{10} |^{2} +|a_{10}^{10} |^{2} =|a_{01}^{11} |^{2} +|a_{11}^{11} |^{2} =1,
    \\
     & a_{00}^{00}\overline{a_{00}^{10}} +a_{10}^{00}\overline{a_{10}^{10}}
    =
    a_{01}^{01}\overline{a_{01}^{11}} +a_{11}^{01}\overline{a_{11}^{11}} =0,
\end{align*}
we call the model the \textit{quantum cellular automata} (QCA). This QCA is introduced by Konno \cite{B32} as a quantum counterpart of the DK model. Next, we consider the second type of operators called the \textit{global} operator $ Q_{N}^{(g)}$, defined by the $ 2^{N} \times 2^{N}$ matrix as
\begin{align*}
    Q_{N}^{(g)} & =\left(I_{2} \otimes I_{2} \otimes \cdots \otimes I_{2} \otimes Q^{(l)}\right)\left(I_{2} \otimes I_{2} \otimes \cdots \otimes Q^{(l)} \otimes I_{2}\right)\nonumber  \\
                & ~~~~\cdots \left(I_{2} \otimes Q^{(l)} \otimes \cdots \otimes I_{2} \otimes I_{2}\right)\left(Q^{(l)} \otimes I_{2} \otimes \cdots \otimes I_{2} \otimes I_{2}\right)
    \nonumber                                                                                                                                                                           \\
                & =\left(I_{2^{(N-2)}} \otimes Q^{(l)}\right)\left(I_{2^{(N-3)}} \otimes Q^{(l)} \otimes I_{2}\right)
    \nonumber
    \\
                & ~~~~~~~~~~~~~~\cdots \left(I_{2} \otimes Q^{(l)} \otimes I_{2^{(N-3)}}\right)\left(Q^{(l)} \otimes I_{2^{(N-2)}}\right) ,
\end{align*}
where $I_{n}$ is the $ n\times n$ identity matrix. In particular, if $ N=2$, $ Q_{N}^{(g)}$ becomes $ Q^{(l)}$. To make the discussion concise, we put $ Q_{1}^{(g)} =I_{2}$. Finally, for the initial state $\eta _{0} \in \mathbb{C}^{2^{N}},$ the evolution of the IPS on $\mathbb{P}_{N}$ is determined by
\begin{align*}
    \eta _{n} & =\left(Q_{N}^{(g)}\right)^{n} \eta _{0} \ \ (n\in \mathbb{Z}_{\geq }),
\end{align*}
where $\mathbb{Z}_{\geq}$ denoets the set of non-negative integers. For example, when $N=3$ and a configuration $\eta=(0,0,1)$, we have
\begin{align*}
    Q_{3}^{(g)}(0,0,1) & =a_{00}^{00} a_{01}^{01} (0,0,1) +a_{00}^{00}a_{11}^{01} (0,1,1) +a_{10}^{00}a_{01}^{01} (1,0,1) +a_{10}^{00}a_{11}^{01} (1,1,1).
\end{align*}
For PCA, the probability that particles exist in a configuration $(0,1,1)$ is $a_{00}^{00}a_{11}^{01}$, and for QCA, it becomes $ \left|a_{00}^{00} a_{11}^{01} \right| ^{2}$.

\subsection{DK model}
This section gives the definition and property of the DK model.
The space-time $ S$ in which the DK model is considered is
\begin{align*}
    S & =\{s=(x,n) \in \mathbb{Z} \times \mathbb{Z}_{\geq}\} ,
\end{align*}
where the integer lattice $\mathbb{Z}$ represents the space and the set of non-negative integers $\mathbb{Z}_{\geq}$ represents time. The DK model is defined by two parameters $ p,q\in [ 0,1]$. We define the evolution as
\begin{align*}
    P\left(x\in \xi _{n+1}^{A} |\xi _{n}^{A}\right) & =f\left(
    \left|\xi _{n}^{A} \cap \{x,x+1\}\right|
    \right),
\end{align*}
where $f$ is defined by $p$ and $q$ as follows:
\begin{align*}
    f(0) =0,\ \  & f(1) =p,\ \ f(2) =q.
\end{align*}
Here, $\xi _{n}^{A}(\subset \mathbb{Z})$ denotes a set of sites that particles exist at time $n$ starting from $A(\subset\mathbb{Z})$, $P(E_1|E_2)$ means the conditional probability of an event $E_1$ given an event $E_2$, and $|B|$ is the number of elements in a set $B$. The visual representation of the dynamics is expressed in Table \ref{table:1}.
Here, the dynamics of the DK model can also be seen as the multi-particle system defined in subsection \ref{sec02-1} with the \textit{local} operator $Q_{DK}^{(l)}$:
\begin{align*}
    Q_{DK}^{(l)} & =\begin{bmatrix}
                        1     & \cdot & 1-p   & \cdot \\
                        \cdot & 1-p   & \cdot & 1-q   \\
                        \cdot & \cdot & p     & \cdot \\
                        \cdot & p     & \cdot & q
                    \end{bmatrix} .
\end{align*}
In particular, if $q=p,$ the probability that each site opens is $p$, thus the DK model becomes the oriented site percolation. Also, if $q=1-(1-p)^{2}$, the probability that each bond opens is $p$, hence the DK model becomes the oriented bond percolation (see \cite{B10} for details).
\begin{table}[h!]
    {\small
        \begin{align*}
             & \begin{array}{ c|c c|c c|c c|c c }
                                      & \ \ \ x\ \ \  & x+1   & \ \ \ x\ \ \     & x+1     & \ \ \ x\ \ \  & x+1   & \ \ \ x\ \ \  & x+1     \\
                   \hline
                   \text{step }n      & \circ         & \circ & \circ            & \bullet & \bullet       & \circ & \bullet       & \bullet \\
                   \text{step }n+1    & \bullet       &       & \ \ \bullet \ \  &         & \bullet       &       & \bullet       &         \\
                   \text{probability} & 0             &       & p                &         & p             &       & \ q\          &
               \end{array}
        \end{align*}
        \begin{align*}
             & \begin{array}{ c|c c|c c|c c|c c }
                                      & \ \ \ x\ \ \  & x+1   & \ \ \ x\ \ \  & x+1     & \ \ \ x\ \ \  & x+1   & \ \ \ x\ \ \  & x+1     \\
                   \hline
                   \text{step }n      & \circ         & \circ & \circ         & \bullet & \bullet       & \circ & \bullet       & \bullet \\
                   \text{step }n+1    & \circ         &       & \circ         &         & \circ         &       & \circ         &         \\
                   \text{probability} & 1             &       & 1-p           &         & 1-p           &       & 1-q\          &
               \end{array}
        \end{align*}}
    \caption{Dynamics of the DK model, We put $ ``\circ "$ denoting that $x$ is vacant, and $ ``\bullet "$ means that a particle exists}
    \label{table:1}
\end{table}
In addition, if $0\leq p\leq q\leq 1$, the DK model is called ``attractive"; the more particles there are, the more likely it is to generate particles. However, if $0\leq q< p\leq 1$, the DK model is called ``non-attractive", and more particles do not necessarily mean that particles are more likely to be produced. We cannot anticipate the global dynamic of particles since it has the effect of both increasing and reducing particles locally.
We also set $\xi _{n}^{0}$ to denote the DK model at time $n$ starting from the origin. Then, the survival probability that particles will continue to exist is given by
\begin{align*}
    \rho (p,q) & =P\left(\xi _{n}^{0} \neq \emptyset \text{ for all } n\geq 0\right) .
\end{align*}
The sequence $ \left\{\xi _{n}^{0} \neq \emptyset \right\}$ is a monotonically non-increasing function, and the strict definition of $ \rho (p,q)$ is
\begin{align*}
    \rho (p,q) & =\lim _{n\rightarrow \infty } P\left(\xi _{n}^{0} \neq \emptyset \right) .
\end{align*}
Also, the survival probability of the DK model which started from $A$ is given as
\begin{align*}
    \sigma (A) & =\sigma (A:p,q) =P\left(\xi _{n}^{A} \neq \emptyset\text{ for all }n\geq 0\right) .
\end{align*}
By contrast, the extinction probability is defined by
\begin{align*}
    \nu (A) =\nu (A:p,q) & =1-\sigma (A).
\end{align*}
Furthermore, when we fix $ q\in [ 0,1]$, we introduce two critical probabilities.
\begin{align*}
    p_{c}(q)     & =\sup \left\{p\in [ 0,1] :\rho \left(p^{\prime } ,q\right) =0\text{ for all }p^{\prime } \in [ 0,p]\right\} , \\
    p_{c}^{*}(q) & =\inf\left\{p\in [ 0,1] :\rho \left(p^{\prime } ,q\right)  >0\text{ for all }p^{\prime } \in [ p,1]\right\} .
\end{align*}
We can easily check that
\begin{align*}
    0\leq p_{c}(q) \leq p_{c}^{*}(q) \leq 1.
\end{align*}
These critical probabilities are not mathematically derived; however, the Monte Carlo Simulation conducted in \cite{lubeck} conjectured that $p_c(q)=p^*_c(q)$ for any $q\in[0,1]$, and $\rho(p,q_0)$ is monotonically non-decreasing function with respect to $p$ for any fixed $q_0\in[0,1]$. The critical line $\{(p_{c}(q) ,q) :q\in [ 0,1]\}$ given by the Monte Carlo Simulation implies that the area is divided into two areas; the survival region $D_{s}$ and the extinction region $D_{e}$ defined by
\begin{align*}
    D_{s} =\{(p,q) \in D:\rho (p,q)  >0\} ,\ \  & \ \ D_{e} =\{(p,q) \in D:\rho (p,q) =0\} .
\end{align*}

For a simple case, $q=1$, it is shown that $ p_{1} \leq p_{2}$ implies $ \rho (p_{1} ,1) \leq \rho (p_{2} ,1)$ by the coupling method \cite{Schinazi1999}, and $\rho(p,1)$ is solved for all $p$ as follows:
\begin{align*}
    \rho (p,1) & =\begin{cases}
                      1-(1-p)^{2} /p^{2}  & (1/2\leq p\leq 1) , \\
                      \ \ \ \ \ \ \ \ \ 0 & (0\leq p\leq 1/2) .
                  \end{cases}
\end{align*}
However, the critical line is not derived mathematically for the general case, and even the monotonicity of $\rho$ is not proved; the DK model has a lot of unsolved problems.


\subsection{The IPS-type zeta function}
In this subsection, we consider the IPS-type zeta function introduced in \cite{IPS}, defined by
\begin{align}
    \overline{\zeta }\left(Q^{(l)} ,\mathbb{P}_{N} ,u\right) & =\det\left(I_{2^{N}} -uQ_{N}^{(g)}\right)^{-1/2^{N}} \ \ \ \ (N\in \mathbb{Z}_{ >}) ,
    \label{zetadet}
\end{align}
where $\mathbb{Z}_{ >}$ denotes the set of positive integers. For $N=1$, we have
\begin{align}
    \overline{\zeta }\left(Q^{(l)} ,\mathbb{P}_{1} ,u\right) & =\det(I_{2} -uI_{2})^{-1/2} =(1-u)^{-1} .
    \label{zeta}
\end{align}
The IPS-type zeta function is based on the multi-particle system on the configuration space. We define $C_{r}\left(Q^{(l)} ,\mathbb{P}_{N}\right)$ as follows:
\begin{align*}
    \overline{\zeta }\left(Q^{(l)} ,\mathbb{P}_{N} ,u\right) & =\exp\left(\sum _{r=1}^{\infty }\frac{C_{r}\left(Q^{(l)} ,\mathbb{P}_{N}\right)}{r} u^{r}\right) \ \ \ \ \ \ \ (N\in \mathbb{Z}_{ >}).
\end{align*}
Then $C_{r}\left(Q^{(l)} ,\mathbb{P}_{N}\right)$ means a rate of configurations that return to the initial configuration at $r$ step. The visual representation of the case $(p,q)=(1,0)$ is shown in Table \ref{table:2}.
\begin{table}[H]
    \centering
    {\small
        \begin{align*}
             & \begin{array}{ c|c|c|c|c|c|c|c|c }
                   r=0
                       & 0\ 0\ 0
                       & 0\ 0\ 1
                       & 0\ 1\ 0
                       & 0\ 1\ 1
                       & 1\ 0\ 0
                       & 1\ 0\ 1
                       & 1\ 1\ 0
                       & 1\ 1\ 1
                   \\
                   \hline
                   r=1 &
                   \colorbox{lightgray}{0\ 0\ 0}
                       & 0\ 1\ 1
                       & 1\ 1\ 0
                       & 1\ 0\ 1
                       & \colorbox{lightgray}{1\ 0\ 0}
                       & 1\ 1\ 1
                       & 0\ 1\ 0
                       & 0\ 0\ 1
                   \\
                   r=2
                       & \colorbox{lightgray}{0\ 0\ 0}
                       & 1\ 0\ 1
                       & \colorbox{lightgray}{0\ 1\ 0}
                       & 1\ 1\ 1
                       & \colorbox{lightgray}{1\ 0\ 0}
                       & 0\ 0\ 1
                       & \colorbox{lightgray}{1\ 1\ 0}
                       & 0\ 1\ 1
                   \\
                   r=3
                       & \colorbox{lightgray}{0\ 0\ 0}
                       & 1\ 1\ 1
                       & 1\ 1\ 0
                       & 0\ 0\ 1
                       & \colorbox{lightgray}{1\ 0\ 0}
                       & 0\ 1\ 1
                       & 0\ 1\ 0
                       & 1\ 0\ 1
                   \\
                   r=4
                       & \colorbox{lightgray}{0\ 0\ 0}
                       & \colorbox{lightgray}{0\ 0\ 1}
                       & \colorbox{lightgray}{0\ 1\ 0}
                       & \colorbox{lightgray}{0\ 1\ 1}
                       & \colorbox{lightgray}{1\ 0\ 0}
                       & \colorbox{lightgray}{1\ 0\ 1}
                       & \colorbox{lightgray}{1\ 1\ 0}
                       & \colorbox{lightgray}{1\ 1\ 1}
               \end{array}
        \end{align*}
    }
    \caption{Dynamics of the case $N=3$, $(p,q) =(1,0)$. Configurations which match the initial configurations are highlighted. Here, $C_{1}=\frac{1}{4},C_{2}=\frac{1}{2},C_3=\frac{1}{4},C_4=1$, and they correspond to the rate of highlighted configurations. Also, since $C_4=1$, i.e., the configuration of $r=4$ is identical to the initial configuration, thus we can say that the period of this dynamic is $4$.}
    \label{table:2}
\end{table}
Let $ \lambda _{j} \ \left(j=1,2,\dotsc ,2^{N}\right)$ be eigenvalues of $ Q_{N}^{(g)}$ and $ \mathrm{tr}(A)$ be the trace of a square matrix $A$. The previous study \cite{IPS} gives the following theorem.
\begin{theorem}
    \begin{align}
        \label{eq:log}
        \log\left\{\overline{\zeta }\left(Q^{(l)} ,\mathbb{P}_{N} ,u\right)\right\} & =\sum _{r=1}^{\infty }\left\{\frac{1}{2^{N}}\mathrm{tr}\left(\left(Q_{N}^{(g)}\right)^{r}\right)\right\}\frac{u^{r}}{r} .
    \end{align}
\end{theorem}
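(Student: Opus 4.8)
The plan is to start from the determinant formula (\ref{zetadet}) and turn the determinant into a trace through the spectrum of $Q_N^{(g)}$. Taking logarithms in (\ref{zetadet}) gives, for $u$ in a neighbourhood of $0$ where the principal $2^N$-th root (equivalently, $\overline{\zeta}(Q^{(l)},\mathbb{P}_N,0)=1$) is well defined,
\[
\log\left\{\overline{\zeta}\left(Q^{(l)},\mathbb{P}_N,u\right)\right\} = -\frac{1}{2^N}\log\det\left(I_{2^N}-uQ_N^{(g)}\right),
\]
so it suffices to prove $\log\det\left(I_{2^N}-uQ_N^{(g)}\right) = -\sum_{r=1}^\infty \frac{u^r}{r}\,\mathrm{tr}\left(\left(Q_N^{(g)}\right)^r\right)$.

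First I would factor the characteristic polynomial of $Q_N^{(g)}$ over $\mathbb{C}$, which yields $\det\left(I_{2^N}-uQ_N^{(g)}\right) = \prod_{j=1}^{2^N}(1-u\lambda_j)$, the $\lambda_j$ being the eigenvalues listed with algebraic multiplicity. By Schur triangularization (or the Jordan normal form) together with the invariance of the trace under conjugation, one also has $\mathrm{tr}\left(\left(Q_N^{(g)}\right)^r\right) = \sum_{j=1}^{2^N}\lambda_j^r$ for every $r\geq 1$, since the $r$-th power of an upper-triangular matrix has diagonal entries $\lambda_j^r$. Next, for $|u| < 1/\max_j|\lambda_j|$ --- which is $|u|<1$ in the PCA and QCA cases, in particular for the DK model since $Q_{DK}^{(l)}$ is column-stochastic, and all of $\mathbb{C}$ when $Q_N^{(g)}$ is nilpotent --- every factor $1-u\lambda_j$ lies in the disk of radius $1$ about $1$, so its principal logarithm has the absolutely convergent expansion $\log(1-u\lambda_j) = -\sum_{r=1}^\infty (u\lambda_j)^r/r$. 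Summing over $j$, interchanging the two absolutely convergent sums, and using the trace identity gives
\[
\log\det\left(I_{2^N}-uQ_N^{(g)}\right) = \sum_{j=1}^{2^N}\log(1-u\lambda_j) = -\sum_{r=1}^\infty \frac{u^r}{r}\sum_{j=1}^{2^N}\lambda_j^r = -\sum_{r=1}^\infty \frac{u^r}{r}\,\mathrm{tr}\left(\left(Q_N^{(g)}\right)^r\right),
\]
and dividing by $-2^N$ establishes the claimed identity, first as an identity of analytic functions near $u=0$ and hence also as an identity of formal power series in $u$.

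I expect the only subtle points to be the multiplicity bookkeeping and the interchange of sums: the identities $\det(I_{2^N}-uQ_N^{(g)}) = \prod_j (1-u\lambda_j)$ and $\mathrm{tr}((Q_N^{(g)})^r) = \sum_j \lambda_j^r$ must be used with \emph{algebraic} multiplicities, which is why one passes to a triangular form rather than assuming $Q_N^{(g)}$ is diagonalizable, and the exchange of $\sum_j$ and $\sum_r$ must be justified by absolute convergence on the disk $|u| < 1/\max_j|\lambda_j|$. Once these are in place, the rest is the standard $\log\det = \mathrm{tr}\log$ computation applied to $I_{2^N}-uQ_N^{(g)}$.
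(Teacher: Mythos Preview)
Your argument is correct: the chain
\[
\log\overline{\zeta}=-\frac{1}{2^N}\log\det\bigl(I_{2^N}-uQ_N^{(g)}\bigr)
=-\frac{1}{2^N}\sum_{j}\log(1-u\lambda_j)
=\frac{1}{2^N}\sum_{r\ge 1}\frac{u^r}{r}\sum_j\lambda_j^{\,r}
=\sum_{r\ge 1}\frac{u^r}{r}\cdot\frac{1}{2^N}\mathrm{tr}\bigl((Q_N^{(g)})^r\bigr)
\]
is exactly the standard $\log\det=\mathrm{tr}\log$ computation, and your remarks on algebraic multiplicities (via Schur/Jordan form) and on the radius of convergence are the right caveats.

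As for comparison with the paper: there is nothing to compare. The paper does \emph{not} prove this theorem; it simply quotes it from the earlier work~\cite{IPS} (``The previous study \cite{IPS} gives the following theorem''). Your write-up therefore supplies what the paper omits, and it is precisely the argument one would expect that reference to contain.
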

Combining \eqref{zetadet} and \eqref{eq:log} implies the following important result.
\begin{align*}
    C_{r}\left(Q^{(l)} ,\mathbb{P}_{N}\right) & =\frac{1}{2^{N}}\mathrm{tr}\left(\left(Q_{N}^{(g)}\right)^{r}\right) \ \ \ \ \ (r,N\in \mathbb{Z}_{ >}) .
\end{align*}
For example, we consider the \textit{global} operator $ Q_{3}^{(g)}$ of the DK model. By a direct calculation, we obtain the eigenvalues of $ Q_{3}^{(g)}$:
\begin{align*}
    \mathrm{Spec}\left(Q_{3}^{(g)}\right) & =\left\{[ 1]^{2} ,[ p]^{2} ,[ q-p]^{1} ,[ p(q-p)]^{1} ,[ \lambda _{+}]^{1} ,[ \lambda _{-}]^{1}\right\},
\end{align*}
where
\begin{align*}
    \lambda _{\pm } & =\frac{1}{2}\left(-k\pm \sqrt{k^{2} -4p(p-q)^{2}}\right),\ k =p^{2}-q^{2} +pq-p .
\end{align*}
In addition, we get
\begin{align*}
    C_{r}=\frac{1}{2^{3}}\left\{2+2p^{r} +(q-p)^{r} +(p(q-p))^{r} +(\lambda _{+})^{r} +(\lambda _{-})^{r}\right\} .
\end{align*}
Finally, the zeta function becomes
\begin{align*}
     & \overline{\zeta }\left(Q_{DK}^{(l)} ,\mathbb{P}_{3} ,u\right)
    \\
     & =\left[(1-u)^{2}(1-pu)^{2}\{1-(q-p) u\}\{1-p(q-p) u\} \ (1-\lambda _{+} u)(1-\lambda _{-} u)\right]^{-\frac{1}{2^{3}}}.
\end{align*}

\section{Results \label{sec03}}
In the previous section, we showed that $ \mathrm{tr}\left(\left(Q_{N}^{(g)}\right)^{r}\right)$ is essential for the calculations of the zeta function, hence we focus on the analysis on the trace of the $global$ operator in this subsection.
We decompose $Q_{N}^{(g)}$ into $2^{N-1} \times 2^{N-1}$ block matrices, $E_{N} ,F_{N} ,G_{N}$ and $H_{N}$, as below:
\[
    \ Q_{N}^{(g)} =\left[\begin{array}{ l l }
            E_{N} & F_{N} \\
            G_{N} & H_{N}
        \end{array}\right],\quad (N\geq1).
\]
Then, we get the important relation.
\begin{lemma}
    \label{lem}
    $Q_{N+1}^{(g)}$ can be expressed as
    \[
        \ Q_{N+1}^{(g)} =\left[\begin{array}{ c c c c }
                a_{00}^{00} E_{N} & a_{01}^{01} F_{N} & a_{00}^{10} E_{N} & a_{01}^{11} F_{N} \\
                a_{00}^{00} G_{N} & a_{01}^{01} H_{N} & a_{00}^{10} G_{N} & a_{01}^{11} H_{N} \\
                a_{10}^{00} E_{N} & a_{11}^{01} F_{N} & a_{10}^{10} E_{N} & a_{11}^{11} F_{N} \\
                a_{10}^{00} G_{N} & a_{11}^{01} H_{N} & a_{10}^{10} G_{N} & a_{11}^{11} H_{N}
            \end{array}\right] .
    \]
\end{lemma}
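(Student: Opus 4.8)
The plan is to reduce the claim to a single multiplicative recursion for the global operator and then expand that recursion in block form. \textbf{Step 1 (a recursion for $Q_{N+1}^{(g)}$).} Writing $Q_{N+1}^{(g)}$ out from its definition, it is a product of $N$ factors of the shape $I_{2^{m}}\otimes Q^{(l)}\otimes I_{2^{N-1-m}}$ with $m$ running from $N-1$ down to $0$. Each of the first $N-1$ factors (those with $m\geq 1$) carries a common $I_2$ in its leftmost tensor slot, so it equals $I_2\otimes\bigl(I_{2^{m-1}}\otimes Q^{(l)}\otimes I_{2^{N-2-(m-1)}}\bigr)$; after reindexing $m\mapsto m-1$ the bracketed operator is precisely the corresponding factor in the definition of $Q_N^{(g)}$. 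Hence the product of those $N-1$ factors is $I_2\otimes Q_N^{(g)}$, while the remaining factor is $Q^{(l)}\otimes I_{2^{N-1}}$, giving
\[
Q_{N+1}^{(g)}=\bigl(I_2\otimes Q_N^{(g)}\bigr)\bigl(Q^{(l)}\otimes I_{2^{N-1}}\bigr),\qquad N\geq 1,
\]
where for $N=1$ the empty product is read as $I_4$ and $Q_1^{(g)}=I_2$, which simply recovers $Q_2^{(g)}=Q^{(l)}$.

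\textbf{Step 2 (expand in blocks).} Next I would decompose both factors into a $4\times 4$ array of $2^{N-1}\times 2^{N-1}$ blocks, with the block indices labelling the joint state of the first two sites. Since tensoring with $I_2$ on the left repeats a matrix block-diagonally, the decomposition $Q_N^{(g)}=\left[\begin{smallmatrix}E_N&F_N\\ G_N&H_N\end{smallmatrix}\right]$ yields
\[
I_2\otimes Q_N^{(g)}=\left[\begin{smallmatrix}E_N&F_N&0&0\\ G_N&H_N&0&0\\ 0&0&E_N&F_N\\ 0&0&G_N&H_N\end{smallmatrix}\right],
\]
while $Q^{(l)}\otimes I_{2^{N-1}}$ is just $Q^{(l)}$ with each scalar entry $a_{kl}^{ij}$ replaced by $a_{kl}^{ij}I_{2^{N-1}}$, where the standing hypothesis $a_{kl}^{ij}=0$ for $j\neq l$ leaves only eight nonzero blocks. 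Carrying out the $4\times 4$ block product is then routine: because every nonzero block of $Q^{(l)}\otimes I_{2^{N-1}}$ is a scalar multiple of the identity, the multiplication is formally the scalar $4\times 4$ product of $Q^{(l)}$ with $E_N,F_N,G_N,H_N$ substituted in, and the two sparsity patterns collapse each block-entry sum to one surviving term, producing exactly the matrix asserted in the lemma.

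\textbf{Expected obstacle.} There is no analytic content here; the one thing requiring care is the bookkeeping, namely keeping straight which tensor slot (hence which site) corresponds to which block index when one passes from the single-site splitting used for $Q_N^{(g)}$ to the two-site splitting used for $Q_{N+1}^{(g)}$, together with the low-$N$ edge cases in Step~1. The explicit action of $Q_3^{(g)}$ on a configuration displayed in Section~\ref{sec02} provides a convenient consistency check on the final formula.
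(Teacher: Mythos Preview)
Your proposal is correct and follows essentially the same route as the paper: establish the recursion $Q_{N+1}^{(g)}=(I_2\otimes Q_N^{(g)})(Q^{(l)}\otimes I_{2^{N-1}})$, write each factor as a $4\times 4$ block matrix, and multiply. The only difference is that you justify the recursion explicitly by regrouping the tensor factors, whereas the paper simply asserts it ``from the definition''; otherwise the arguments are identical.
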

\begin{proof}
    From the definition, we have the following formula:
    \[
        Q_{N+1}^{(g)} =\left(I_{2} \otimes Q_{N}^{(g)}\right)\left(Q^{(l)} \otimes I_{2^{(N-1)}}\right) \ (N\geq 1) .
    \]
    By a direct calculation, we get
    \[
        \left(I_{2} \otimes Q_{N}^{(g)}\right) =\left[\begin{array}{ c c c c }
                E_{N}         & F_{N}         & O_{2^{(N-1)}} & O_{2^{(N-1)}} \\
                G_{N}         & H_{N}         & O_{2^{(N-1)}} & O_{2^{(N-1)}} \\
                O_{2^{(N-1)}} & O_{2^{(N-1)}} & E_{N}         & F_{N}         \\
                O_{2^{(N-1)}} & O_{2^{(N-1)}} & G_{N}         & H_{N}
            \end{array}\right]
    \]
    and
    \[
        \left(Q^{(l)} \otimes I_{2^{(N-1)}}\right) =\left[\begin{array}{ c c c c }
                a_{00}^{00} I_{2^{(N-1)}} & O_{2^{(N-1)}}             & a_{00}^{10} I_{2^{(N-1)}} & O_{2^{(N-1)}}             \\
                O_{2^{(N-1)}}             & a_{01}^{01} I_{2^{(N-1)}} & O_{2^{(N-1)}}             & a_{01}^{11} I_{2^{(N-1)}} \\
                a_{10}^{00} I_{2^{(N-1)}} & O_{2^{(N-1)}}             & a_{10}^{10} I_{2^{(N-1)}} & O_{2^{(N-1)}}             \\
                O_{2^{(N-1)}}             & a_{11}^{01} I_{2^{(N-1)}} & O_{2^{(N-1)}}             & a_{11}^{11} I_{2^{(N-1)}}
            \end{array}\right] .
    \]
    Here, $O_n$ denotes the $n\times n$ zero matrix. Hence, $Q_{N+1}^{(g)}$ becomes
    \[
        \begin{aligned}
            Q_{N+1}^{(g)} & =\left(I_{2} \otimes Q_{N}^{(g)}\right)\left(Q^{(l)} \otimes I_{2^{(N-1)}}\right)    \\
                          & =\left[\begin{array}{ c c c c }
                                           a_{00}^{00} E_{N} & a_{01}^{01} F_{N} & a_{00}^{10} E_{N} & a_{01}^{11} F_{N} \\
                                           a_{00}^{00} G_{N} & a_{01}^{01} H_{N} & a_{00}^{10} G_{N} & a_{01}^{11} H_{N} \\
                                           a_{10}^{00} E_{N} & a_{11}^{01} F_{N} & a_{10}^{10} E_{N} & a_{11}^{11} F_{N} \\
                                           a_{10}^{00} G_{N} & a_{11}^{01} H_{N} & a_{10}^{10} G_{N} & a_{11}^{11} H_{N}
                                       \end{array}\right] .
        \end{aligned}
    \]
\end{proof}
Hence, the next result is immediately shown.
\begin{corollary}
    \label{cor}
    \begin{align}
        \label{lemma:rel0}
         & E_{N+1} +G_{N+1} =F_{N+1} +H_{N+1} =Q_{N}^{(g)}
    \end{align}
    and following relations hold:
    \begin{align}
        \label{lemma:rel1}
         & \mathrm{tr}(E_{N}) =a_{00}^{00}\mathrm{tr}(E_{N-1}) +a_{01}^{01}\mathrm{tr}(H_{N-1}), \\
        \label{lemma:rel2}
         & \mathrm{tr}(H_{N}) =a_{10}^{10}\mathrm{tr}(E_{N-1}) +a_{11}^{11}\mathrm{tr}(H_{N-1}).
    \end{align}
\end{corollary}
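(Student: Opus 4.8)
The plan is to read the four $2^{N}\times 2^{N}$ blocks $E_{N+1},F_{N+1},G_{N+1},H_{N+1}$ of $Q_{N+1}^{(g)}$ directly off the decomposition in Lemma~\ref{lem} and then perform a few one-line matrix manipulations. The $4\times4$ array in Lemma~\ref{lem} has entries of size $2^{N-1}\times2^{N-1}$; regrouping it as a $2\times2$ array of $2^{N}\times2^{N}$ blocks identifies
\[
E_{N+1}=\begin{bmatrix} a_{00}^{00}E_N & a_{01}^{01}F_N\\ a_{00}^{00}G_N & a_{01}^{01}H_N\end{bmatrix},\qquad G_{N+1}=\begin{bmatrix} a_{10}^{00}E_N & a_{11}^{01}F_N\\ a_{10}^{00}G_N & a_{11}^{01}H_N\end{bmatrix},
\]
and likewise $F_{N+1}$ carries the weights $a_{00}^{10},a_{01}^{11}$ and $H_{N+1}$ the weights $a_{10}^{10},a_{11}^{11}$ in place of $a_{00}^{00},a_{01}^{01}$.

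For the identity $E_{N+1}+G_{N+1}=F_{N+1}+H_{N+1}=Q_N^{(g)}$ I would simply add the displayed blocks: the top-left block of $E_{N+1}+G_{N+1}$ is $(a_{00}^{00}+a_{10}^{00})E_N$, the top-right is $(a_{01}^{01}+a_{11}^{01})F_N$, and analogously in the bottom row, so the normalisations $a_{00}^{00}+a_{10}^{00}=a_{01}^{01}+a_{11}^{01}=1$ (the transposed-stochastic condition on $Q^{(l)}$, which holds for the DK model) collapse $E_{N+1}+G_{N+1}$ to the matrix with blocks $E_N,F_N,G_N,H_N$, that is, to $Q_N^{(g)}$; the same computation with $a_{00}^{10}+a_{10}^{10}=a_{01}^{11}+a_{11}^{11}=1$ gives $F_{N+1}+H_{N+1}=Q_N^{(g)}$.

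For the two trace recursions I would invoke Lemma~\ref{lem} once more, this time with the index shifted down by one, so that $E_N$ and $H_N$ are themselves the $2\times2$-blocked matrices
\[
E_N=\begin{bmatrix} a_{00}^{00}E_{N-1} & a_{01}^{01}F_{N-1}\\ a_{00}^{00}G_{N-1} & a_{01}^{01}H_{N-1}\end{bmatrix},\qquad H_N=\begin{bmatrix} a_{10}^{10}E_{N-1} & a_{11}^{11}F_{N-1}\\ a_{10}^{10}G_{N-1} & a_{11}^{11}H_{N-1}\end{bmatrix}.
\]
Since the trace of a block matrix equals the sum of the traces of its diagonal blocks, the off-diagonal blocks contribute nothing, and one reads off $\mathrm{tr}(E_N)=a_{00}^{00}\mathrm{tr}(E_{N-1})+a_{01}^{01}\mathrm{tr}(H_{N-1})$ and $\mathrm{tr}(H_N)=a_{10}^{10}\mathrm{tr}(E_{N-1})+a_{11}^{11}\mathrm{tr}(H_{N-1})$. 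The only point requiring any care is the bookkeeping of the two nested block structures — matching the $4\times4$ array of $2^{N-1}$-blocks coming from the lemma against the $2\times2$ array of $2^{N}$-blocks that defines $E_{N+1},\dots,H_{N+1}$ — together with noting that the stochasticity of $Q^{(l)}$ enters the first relation but not the trace recursions; beyond that there is no substantive obstacle.
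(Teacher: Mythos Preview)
Your proof is correct and follows exactly the route the paper intends: the Corollary is stated as ``immediately shown'' from Lemma~\ref{lem}, and you have simply written out the block bookkeeping that this entails. Your explicit remark that the column-sum normalisations $a_{00}^{00}+a_{10}^{00}=a_{01}^{01}+a_{11}^{01}=a_{00}^{10}+a_{10}^{10}=a_{01}^{11}+a_{11}^{11}=1$ are needed for \eqref{lemma:rel0} (while the trace recursions \eqref{lemma:rel1}--\eqref{lemma:rel2} hold without them) is a genuine clarification; the paper leaves this hypothesis implicit, but it is indeed required and is in force for the PCA/DK setting that Section~\ref{sec03} treats.
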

By using these relations, (\ref{lemma:rel1}) and (\ref{lemma:rel2}), we can calculate $\operatorname{tr}\left(Q_{N}^{(g)}\right)$ as the following propositions.
\begin{proposition}
    \label{prop1}
    For $N>1$,
    \begin{align*}
        \ \mathrm{tr}\left(Q_{N}^{(g)}\right) & =\sum _{i_{1}}\sum _{i_{2}} \cdots \sum _{i_{N}} a_{i_{1} i_{2}}^{i_{1} i_{2}} a_{i_{2} i_{3}}^{i_{2} i_{3}} \cdots a_{i_{N-1} i_{N}}^{i_{N-1} i_{N}} .\
    \end{align*}
\end{proposition}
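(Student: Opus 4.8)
The plan is to run an induction on $N$ driven by the recursions (\ref{lemma:rel1}) and (\ref{lemma:rel2}) of Corollary \ref{cor}, strengthening the statement so that it pins down $\mathrm{tr}(E_N)$ and $\mathrm{tr}(H_N)$ separately, since these are exactly the quantities the recursions propagate. Because $Q_N^{(g)} = \left[\begin{array}{cc} E_N & F_N \\ G_N & H_N \end{array}\right]$ we have $\mathrm{tr}\left(Q_N^{(g)}\right) = \mathrm{tr}(E_N) + \mathrm{tr}(H_N)$, so it is enough to prove that for every $N \geq 2$
\begin{align*}
    \mathrm{tr}(E_N) &= \sum_{i_2}\cdots\sum_{i_N} a_{0 i_2}^{0 i_2} a_{i_2 i_3}^{i_2 i_3}\cdots a_{i_{N-1} i_N}^{i_{N-1} i_N}, \\
    \mathrm{tr}(H_N) &= \sum_{i_2}\cdots\sum_{i_N} a_{1 i_2}^{1 i_2} a_{i_2 i_3}^{i_2 i_3}\cdots a_{i_{N-1} i_N}^{i_{N-1} i_N},
\end{align*}
that is, the same sum as in the Proposition but with the first index $i_1$ frozen to $0$, respectively to $1$; adding the two identities then yields the claim.

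For the base case $N = 2$, recall $Q_1^{(g)} = I_2$, so $\mathrm{tr}(E_1) = \mathrm{tr}(H_1) = 1$, and (\ref{lemma:rel1})--(\ref{lemma:rel2}) give $\mathrm{tr}(E_2) = a_{00}^{00} + a_{01}^{01}$ and $\mathrm{tr}(H_2) = a_{10}^{10} + a_{11}^{11}$, which are the displayed sums at $N = 2$. For the inductive step, assume the refined claim at level $N - 1$ (so $N \geq 3$). Substituting the formula for $\mathrm{tr}(E_{N-1})$ into the first term of (\ref{lemma:rel1}) and matching the frozen leading index of $E_{N-1}$ with the value $i_2 = 0$, the product $a_{00}^{00} a_{0 i_3}^{0 i_3}\cdots a_{i_{N-1} i_N}^{i_{N-1} i_N}$ is exactly $a_{i_1 i_2}^{i_1 i_2}\cdots a_{i_{N-1} i_N}^{i_{N-1} i_N}$ with $i_1 = i_2 = 0$; similarly $a_{01}^{01}\mathrm{tr}(H_{N-1})$ supplies the part with $i_1 = 0$, $i_2 = 1$, the leading index of $H_{N-1}$ being $1$. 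Summing over $i_3,\dots,i_N$ and then over $i_2\in\{0,1\}$ reconstitutes precisely $\sum_{i_2}\cdots\sum_{i_N} a_{0 i_2}^{0 i_2}\cdots a_{i_{N-1} i_N}^{i_{N-1} i_N}$, the claimed formula for $\mathrm{tr}(E_N)$; the computation for $\mathrm{tr}(H_N)$ from (\ref{lemma:rel2}) is identical with $i_1 = 1$. This closes the induction, and adding the two identities gives the Proposition.

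I do not anticipate a genuine obstacle: this is a bookkeeping induction, and the only delicate point is making sure, in the inductive step, that the frozen first index of $E_{N-1}$ (resp.\ $H_{N-1}$) is correctly identified with the new second index value $0$ (resp.\ $1$), so that the two recursion terms glue into one sum over all $i_2$. A more conceptual phrasing is available: (\ref{lemma:rel1})--(\ref{lemma:rel2}) read $\bigl(\mathrm{tr}(E_N),\mathrm{tr}(H_N)\bigr)^{T} = M\,\bigl(\mathrm{tr}(E_{N-1}),\mathrm{tr}(H_{N-1})\bigr)^{T}$ for the $2\times 2$ matrix $M$ with $M_{ij} = a_{ij}^{ij}$ $(i,j\in\{0,1\})$, hence $\bigl(\mathrm{tr}(E_N),\mathrm{tr}(H_N)\bigr)^{T} = M^{N-1}\mathbf{1}$ with $\mathbf{1} = (1,1)^{T}$, and $\mathrm{tr}\left(Q_N^{(g)}\right) = \mathbf{1}^{T} M^{N-1}\mathbf{1}$, which expands entrywise into $\sum_{i_1,\dots,i_N} a_{i_1 i_2}^{i_1 i_2}\cdots a_{i_{N-1} i_N}^{i_{N-1} i_N}$; I would note this but carry out the induction in the write-up.
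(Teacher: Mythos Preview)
Your proof is correct and follows essentially the same route as the paper: strengthen the claim to separate formulas for $\mathrm{tr}(E_N)$ and $\mathrm{tr}(H_N)$, verify the base $N=2$, and induct via the recursions (\ref{lemma:rel1})--(\ref{lemma:rel2}) of Corollary~\ref{cor}. The closing matrix reformulation $\mathrm{tr}\bigl(Q_N^{(g)}\bigr)=\mathbf{1}^{T}M^{N-1}\mathbf{1}$ is a nice bonus not in the paper, but it does not constitute a different approach.
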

\begin{proof}
    Since $\mathrm{tr}\left(Q_{N}^{(g)}\right) =\operatorname{tr}(E_{N}) +\operatorname{tr}(H_{N})$ is satisfied, the proof is complete if the next statement is shown for $N>1$:
    \begin{align*}
         & \operatorname{tr}(E_{N}) =\sum _{i_{1}} \cdots \sum _{i_{N-1}} a_{0i_{1}}^{0i_{1}} a_{i_{1} i_{2}}^{i_{1} i_{2}} \cdots a_{i_{N-2} i_{N-1}}^{i_{N-2} i_{N-1}}, \\
         & \operatorname{tr}(H_{N}) =\sum _{i_{1}} \cdots \sum _{i_{N-1}} a_{1i_{1}}^{1i_{1}} a_{i_{1} i_{2}}^{i_{1} i_{2}} \cdots a_{i_{N-2} i_{N-1}}^{i_{N-2} i_{N-1}}.
    \end{align*}
    If $N=2$, the above statement holds since
    \begin{align*}
         & \operatorname{tr}(E_{2}) =\sum _{i_{1}} a_{0i_{1}}^{0i_{1}} ,\ \operatorname{tr}(H_{2}) =\sum _{i_{1}} a_{1i_{1}}^{1i_{1}} .
    \end{align*}
    Next, we assume that the statement holds for $N-1$, then relations (\ref{lemma:rel1}) and (\ref{lemma:rel2}) in Corollary \ref{cor} gives
    \begin{align*}
        \mathrm{tr}(E_{N}) & =a_{00}^{00}\ \mathrm{tr}(E_{N-1}) +a_{01}^{01}\ \mathrm{tr}(H_{N-1})                                                                 \\
                           & =\sum _{i_{1}} \cdots \sum _{i_{N-1}} a_{0i_{1}}^{0i_{1}} a_{i_{1} i_{2}}^{i_{1} i_{2}} \cdots a_{i_{N-2} i_{N-1}}^{i_{N-2} i_{N-1}}, \\
        \mathrm{tr}(H_{N}) & =a_{10}^{10}\ \mathrm{tr}(E_{N-1}) +a_{11}^{11}\ \mathrm{tr}(H_{N-1})                                                                 \\
                           & =\sum _{i_{1}} \cdots \sum _{i_{N-1}} a_{1i_{1}}^{1i_{1}} a_{i_{1} i_{2}}^{i_{1} i_{2}} \cdots a_{i_{N-2} i_{N-1}}^{i_{N-2} i_{N-1}}.
    \end{align*}
    Therefore, we get the desired conclusion.
\end{proof}
We can also express a trace of the $global$ operator as the following result.
\begin{proposition}
    \label{prop2}
    \begin{align*}
         & \mathrm{tr}\left(Q_{N}^{(g)}\right) =\frac{x_{+}^{n-1} \Lambda _{+} -x_{-}^{n-1} \Lambda _{-}}{a_{01}^{01}(x_{+} -x_{-})} ,
    \end{align*}
    where
    \[
        \Lambda _{\pm } =\left(a_{01}^{01} -a_{00}^{00} +x_{\pm }\right)\left(a_{00}^{00} +a_{01}^{01} -x_{\mp }\right)
    \]
    and
    \[
        x_{\pm } =\frac{\left(a_{00}^{00} +a_{11}^{11}\right) \pm \sqrt{\left(a_{00}^{00} -a_{11}^{11}\right)^{2} +4a_{01}^{01} a_{10}^{10}}}{2} .
    \]
\end{proposition}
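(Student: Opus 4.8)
The plan is to recognize the two relations of Corollary~\ref{cor} as a single two‑dimensional linear recursion and solve it in closed form. Writing $\alpha=a_{00}^{00}$, $\beta=a_{01}^{01}$, $\gamma=a_{10}^{10}$, $\delta=a_{11}^{11}$, and setting
\[
    \mathbf{v}_N=\begin{bmatrix}\mathrm{tr}(E_N)\\[2pt]\mathrm{tr}(H_N)\end{bmatrix},\qquad
    M=\begin{bmatrix}\alpha&\beta\\ \gamma&\delta\end{bmatrix},\qquad
    \mathbf{1}=\begin{bmatrix}1\\1\end{bmatrix},
\]
equations \eqref{lemma:rel1} and \eqref{lemma:rel2} say exactly $\mathbf{v}_N=M\,\mathbf{v}_{N-1}$. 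Since $Q_1^{(g)}=I_2$ gives $\mathbf{v}_1=\mathbf{1}$ (equivalently one may start from the base case $\mathbf{v}_2=(a_{00}^{00}+a_{01}^{01},\,a_{10}^{10}+a_{11}^{11})^{\top}$ computed in the proof of Proposition~\ref{prop1}), we obtain $\mathbf{v}_N=M^{N-1}\mathbf{1}$. As $\mathrm{tr}\!\left(Q_N^{(g)}\right)=\mathrm{tr}(E_N)+\mathrm{tr}(H_N)=\mathbf{1}^{\top}\mathbf{v}_N$, the proposition reduces to evaluating the bilinear form $\mathbf{1}^{\top}M^{N-1}\mathbf{1}$.

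Next I would diagonalize $M$. Its characteristic polynomial is $x^{2}-(\alpha+\delta)x+(\alpha\delta-\beta\gamma)$, whose roots are precisely the $x_{\pm}$ in the statement, so that $x_{+}+x_{-}=\alpha+\delta$ and $x_{+}x_{-}=\alpha\delta-\beta\gamma$. Assuming first $x_{+}\neq x_{-}$, Lagrange interpolation applied to $M$ (i.e.\ the spectral decomposition) gives
\[
    M^{m}=\frac{x_{+}^{m}\,(M-x_{-}I_2)-x_{-}^{m}\,(M-x_{+}I_2)}{x_{+}-x_{-}},
\]
which one checks at $m=0,1$ and then propagates by the recursion $A_{m+1}=(x_{+}+x_{-})A_{m}-x_{+}x_{-}A_{m-1}$ obeyed by both sides. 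Taking $m=N-1$ and applying $\mathbf{1}^{\top}(\cdot)\mathbf{1}$ yields
\[
    \mathrm{tr}\!\left(Q_N^{(g)}\right)=\frac{x_{+}^{N-1}\,\mathbf{1}^{\top}(M-x_{-}I_2)\mathbf{1}-x_{-}^{N-1}\,\mathbf{1}^{\top}(M-x_{+}I_2)\mathbf{1}}{x_{+}-x_{-}}.
\]

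The remaining — and main — step is the algebraic identity $a_{01}^{01}\,\mathbf{1}^{\top}(M-x_{\mp}I_2)\mathbf{1}=\Lambda_{\pm}$. A one‑line matrix computation gives $\mathbf{1}^{\top}(M-x_{\mp}I_2)\mathbf{1}=(\alpha+\beta+\gamma+\delta)-2x_{\mp}$, so one must verify $\beta\bigl((\alpha+\beta+\gamma+\delta)-2x_{-}\bigr)=(\beta-\alpha+x_{+})(\alpha+\beta-x_{-})$. I would do this by substituting $x_{+}=(\alpha+\delta)-x_{-}$ to rewrite the right‑hand side as $(\beta+\delta-x_{-})(\alpha+\beta-x_{-})$, expanding, and eliminating the quadratic term via $x_{-}^{2}=(\alpha+\delta)x_{-}-(\alpha\delta-\beta\gamma)$; everything collapses to $\beta(\alpha+\beta+\gamma+\delta-2x_{-})$. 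The $x_{+}\leftrightarrow x_{-}$ symmetry then gives the companion identity $a_{01}^{01}\,\mathbf{1}^{\top}(M-x_{+}I_2)\mathbf{1}=\Lambda_{-}$, and substituting both into the displayed formula produces exactly the claimed identity (with $n=N$). Two edge cases should be dispatched at the end: when $a_{01}^{01}=0$ the factorization $\Lambda_{\pm}=a_{01}^{01}(\cdots)$ shows the stated quotient is still the correct value (read it as a limit, or simply keep the pre‑division form $\mathbf{1}^{\top}M^{N-1}\mathbf{1}$); and when $x_{+}=x_{-}$ the ratio is a removable $0/0$ singularity, handled by passing to the limit in the same formula or via the Jordan form $M^{m}=x_{+}^{m}I_2+m\,x_{+}^{m-1}(M-x_{+}I_2)$. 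I expect the identity for $\Lambda_{\pm}$ to be the only place needing care; it is precisely what explains the otherwise opaque shape of $\Lambda_{\pm}$ in the statement.
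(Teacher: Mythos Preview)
Your argument is correct. Both proofs solve the linear recursion of Corollary~\ref{cor}, but package it differently. The paper eliminates $\mathrm{tr}(H_N)$ to obtain a scalar second-order recurrence for $\mathrm{tr}(E_N)$ alone, solves it from the initial data $\mathrm{tr}(E_1)=1$, $\mathrm{tr}(E_2)=a_{00}^{00}+a_{01}^{01}$, and then recovers $\mathrm{tr}(H_N)$ via \eqref{lemma:rel1}; it is this last step that introduces the division by $a_{01}^{01}$. You instead keep the pair $(\mathrm{tr}(E_N),\mathrm{tr}(H_N))$ as a vector, identify the transfer matrix $M$, and evaluate $\mathbf{1}^{\top}M^{N-1}\mathbf{1}$ through the spectral resolvent of $M$; the factor $1/a_{01}^{01}$ then appears only as a cosmetic repackaging needed to match the stated product form of $\Lambda_{\pm}$. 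Your route is a bit more structural---the transfer-matrix picture is explicit, and the pre-division expression $\mathbf{1}^{\top}M^{N-1}\mathbf{1}$ is valid even when $a_{01}^{01}=0$---whereas the paper's route stays with scalar recurrences and is marginally more elementary. The substantive content (the same characteristic roots $x_{\pm}$ and the same initial conditions) is identical.
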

\begin{proof}
    By substituting (\ref{lemma:rel1}) to (\ref{lemma:rel2}), we obtain
    \[
        \mathrm{tr}(E_{N+1}) =\left(a_{00}^{00} +a_{11}^{11}\right)\mathrm{tr}(E_{N}) +\left(a_{01}^{01} a_{10}^{10} -a_{00}^{00} a_{11}^{11}\right)\mathrm{tr}(E_{N-1}).
    \]
    We set
    \[
        x_{\pm } =\frac{\left(a_{00}^{00} +a_{11}^{11}\right) \pm \sqrt{\left(a_{00}^{00} -a_{11}^{11}\right)^{2} +4a_{01}^{01} a_{10}^{10}}}{2},
    \]
    which are two solutions of the following characteristic equation:
    \[
        x^{2} -\left(a_{00}^{00} +a_{11}^{11}\right) x-\left(a_{01}^{01} a_{10}^{10} -a_{00}^{00} a_{11}^{11}\right) =0.
    \]
    Hence, we can write
    \begin{align*}
         & \mathrm{tr}(E_{N+1}) -x_{+}\mathrm{tr}(E_{N}) =x_{-}^{N-1}(\mathrm{tr}(E_{2}) -x_{+}\mathrm{tr}(E_{1})), \\
         & \mathrm{tr}(E_{N+1}) -x_{-}\mathrm{tr}(E_{N}) =x_{+}^{N-1}(\mathrm{tr}(E_{2}) -x_{-}\mathrm{tr}(E_{1})).
    \end{align*}
    Note that,
    \[
        \mathrm{tr}(E_{1}) =1,\ \mathrm{tr}(E_{2}) =a_{00}^{00} +a_{01}^{01}.
    \]
    Thus, we can calculate the trace of $E_{N}$ as
    \[
        \mathrm{tr}(E_{N}) =\frac{x_{+}^{N-1}\left(a_{00}^{00} +a_{01}^{01} -x_{-}\right) -x_{-}^{N-1}\left(a_{00}^{00} +a_{01}^{01} -x_{+}\right)}{x_{+} -x_{-}} .
    \]
    Also, from (\ref{lemma:rel1}), we have
    \[
        \begin{aligned}
            \mathrm{tr}\left(Q_{N}^{(g)}\right) & =\mathrm{tr}(E_{N}) +\mathrm{tr}(H_{N})                                                              \\
                                                & =\frac{\mathrm{tr}(E_{N+1}) +\left(a_{01}^{01} -a_{00}^{00}\right)\mathrm{tr}(E_{N})}{a_{01}^{01}} .
        \end{aligned}
    \]
    Therefore, the statemenit is proved by calculating this formula directly.
\end{proof}
Next, we focus on the eigenvalues of the $global$ operator. We write the set of eigenvalues of the $n\times n$ matrix $A$ as
\[
    \operatorname{Spec}(A) =\left\{[ \lambda _{1}]^{m_{1}} ,[ \lambda _{2}]^{m_{2}} ,\dotsc ,[ \lambda _{k}]^{m_{k}}\right\},
\]
where $m_{j} \in \mathbb{Z}_{ >}$ $(1\leq j\leq k)$ denotes the multiplicity of the eigenvalue $\lambda _{j}$. Here, $m_1+m_2+\cdots +m_k=n$. Also, we define the union $\cup$ as the addition of the multiplicities of each eigenvalue in the left and right. For example, we assume $\operatorname{Spec}(A)$ and $\operatorname{Spec}(B)$ have common eigenvalues $k_{1} ,k_{2}$ and expressed as
\begin{align*}
     & \operatorname{Spec}(A) =\left\{[ k_{1}]^{m_{1}} ,[ k_{2}]^{m_{2}} ,[ a_{1}]^{m_{3}} ,[ a_{2}]^{m_{4}}\right\} ,             \\
     & \operatorname{Spec}(B) =\left\{[ k_{1}]^{m_{1}^{\prime }} ,[ k_{2}]^{m_{2}^{\prime }} ,[ b_{1}]^{m_{3}^{\prime }}\right\} .
\end{align*}
Then,
\[
    \operatorname{Spec}(A) \cup \ \operatorname{Spec}(B) =\left\{[ k_{1}]^{m_{1} +m_{1}^{\prime }} ,[ k_{2}]^{m_{1} +m_{1}^{\prime }} ,[ a_{1}]^{m_{3}} ,[ a_{2}]^{m_{4}} ,[ b_{1}]^{m_{3}^{\prime }}\right\} .
\]
By using the above notations, we get the following theorem.
\begin{theorem}
    \label{theo2}
    \begin{align*}
         & \operatorname{Spec}\left(Q_{N+1}^{(g)}\right)    \\
         & \ \ =\operatorname{Spec}\left(Q_{N}^{(g)}\right)
        \cup
        \operatorname{Spec}
        \left(Q_{N}^{(g)}
        \left[
                \begin{array}{ c c }
                    \left(a_{10}^{10} -a_{10}^{00}\right) I_{2^{N-1}} & O_{2^{N-1}}
                    \\
                    O_{2^{N-1}}                                       & \left(a_{11}^{11} -a_{11}^{01}\right) I_{2^{N-1}}
                \end{array}
                \right]
        \right).
    \end{align*}
    In particular, when $t=a_{10}^{10} -a_{10}^{00} =a_{11}^{11} -a_{11}^{01}$, then, the spectrum of $Q_{N}^{(g)}$ can be explicitly expressed as follows:
    \begin{align*}
        \operatorname{Spec}\left(Q_{N}^{(g)}\right) & =\operatorname{Spec}\left(Q_{N-1}^{(g)}\right)\cup\left(  t\times \operatorname{Spec}\left(Q_{N-1}^{(g)}\right)\right) \\
                                                    & =\bigcup _{k=0}^{N-1}\left[ t^{k}\right]^{2\binom{N-1}{k}}.
    \end{align*}
\end{theorem}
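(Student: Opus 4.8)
The plan is to factor the characteristic polynomial $\det(\lambda I_{2^{N+1}}-Q_{N+1}^{(g)})$ as a product of two characteristic polynomials of $2^{N}\times2^{N}$ matrices by two elementary block operations, and then to iterate the resulting recursion for the ``in particular'' part. First I would rewrite Lemma~\ref{lem} by regrouping its $4\times4$ array of $2^{N-1}\times2^{N-1}$ blocks (merging block rows $1,2$ and block rows $3,4$, and likewise the columns) into a $2\times2$ array of $2^{N}\times2^{N}$ blocks, obtaining
\[
    Q_{N+1}^{(g)}=\begin{bmatrix} Q_{N}^{(g)}\Delta_{00} & Q_{N}^{(g)}\Delta_{01}\\ Q_{N}^{(g)}\Delta_{10} & Q_{N}^{(g)}\Delta_{11}\end{bmatrix},
\]
where $\Delta_{00}=\mathrm{diag}(a_{00}^{00}I_{2^{N-1}},a_{01}^{01}I_{2^{N-1}})$, $\Delta_{01}=\mathrm{diag}(a_{00}^{10}I_{2^{N-1}},a_{01}^{11}I_{2^{N-1}})$, $\Delta_{10}=\mathrm{diag}(a_{10}^{00}I_{2^{N-1}},a_{11}^{01}I_{2^{N-1}})$, and $\Delta_{11}=\mathrm{diag}(a_{10}^{10}I_{2^{N-1}},a_{11}^{11}I_{2^{N-1}})$. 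The matrix $D$ in the statement is precisely $\Delta_{11}-\Delta_{10}$; moreover, since $Q^{(l)}$ is a transposed stochastic matrix, $\Delta_{00}+\Delta_{10}=\Delta_{01}+\Delta_{11}=I_{2^{N}}$, and hence also $\Delta_{00}-\Delta_{01}=\Delta_{11}-\Delta_{10}=D$.

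Next I would compute the characteristic polynomial directly. Writing $Q:=Q_{N}^{(g)}$, adding the first block row of $\lambda I_{2^{N+1}}-Q_{N+1}^{(g)}$ to the second block row is determinant-preserving (it is left multiplication by a unipotent block-triangular matrix) and, using $\Delta_{00}+\Delta_{10}=\Delta_{01}+\Delta_{11}=I_{2^{N}}$, replaces the second block row by $[\,\lambda I_{2^{N}}-Q,\ \lambda I_{2^{N}}-Q\,]$. Subtracting the second block column from the first is likewise determinant-preserving, annihilates the lower-left block, and puts $\lambda I_{2^{N}}-Q(\Delta_{00}-\Delta_{01})=\lambda I_{2^{N}}-QD$ in the upper-left block. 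The determinant is now that of a block upper-triangular matrix, so $\det(\lambda I_{2^{N+1}}-Q_{N+1}^{(g)})=\det(\lambda I_{2^{N}}-QD)\,\det(\lambda I_{2^{N}}-Q)$; comparing roots with multiplicities gives $\operatorname{Spec}(Q_{N+1}^{(g)})=\operatorname{Spec}(Q_{N}^{(g)})\cup\operatorname{Spec}(Q_{N}^{(g)}D)$, which is the first assertion.

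For the explicit spectrum, when $t=a_{10}^{10}-a_{10}^{00}=a_{11}^{11}-a_{11}^{01}$ we have $D=t\,I_{2^{N}}$, so $\operatorname{Spec}(Q_{N}^{(g)}D)=t\cdot\operatorname{Spec}(Q_{N}^{(g)})$ and the recursion reads $\operatorname{Spec}(Q_{N+1}^{(g)})=\operatorname{Spec}(Q_{N}^{(g)})\cup\bigl(t\cdot\operatorname{Spec}(Q_{N}^{(g)})\bigr)$. Starting from $\operatorname{Spec}(Q_{1}^{(g)})=\operatorname{Spec}(I_{2})=\{[1]^{2}\}=\bigcup_{k=0}^{0}[t^{k}]^{2\binom{0}{k}}$, I would induct on $N$; the inductive step is immediate from Pascal's identity $\binom{N-1}{k}=\binom{N-2}{k-1}+\binom{N-2}{k}$, yielding $\operatorname{Spec}(Q_{N}^{(g)})=\bigcup_{k=0}^{N-1}[t^{k}]^{2\binom{N-1}{k}}$.

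The computation itself is short, so there is no real ``hard part''; the points that need care are bookkeeping. One must perform the regrouping in the first step consistently with the row and column ordering of Lemma~\ref{lem}, and must be careful that the two operations in the second step are applied to $\lambda I_{2^{N+1}}-Q_{N+1}^{(g)}$ rather than to $Q_{N+1}^{(g)}$ and are genuinely determinant-preserving. It should also be flagged that the identities $\Delta_{00}+\Delta_{10}=\Delta_{01}+\Delta_{11}=I_{2^{N}}$ — i.e.\ the stochasticity of $Q^{(l)}$ — are used in an essential way: without them the first factor would be $\det(\lambda I_{2^{N}}-Q_{N}^{(g)}(\Delta_{00}+\Delta_{10}))$ rather than $\det(\lambda I_{2^{N}}-Q_{N}^{(g)})$, so the theorem as stated is the PCA case, which of course includes the DK model.
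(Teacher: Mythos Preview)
Your proof is correct and is essentially the paper's own argument: both block-triangularize $\lambda I_{2^{N+1}}-Q_{N+1}^{(g)}$ by one block-row addition and one block-column subtraction, the paper invoking $E_{N+1}+G_{N+1}=F_{N+1}+H_{N+1}=Q_{N}^{(g)}$ from Corollary~\ref{cor} where you use the equivalent $\Delta_{00}+\Delta_{10}=\Delta_{01}+\Delta_{11}=I_{2^{N}}$, and your residual diagonal block $Q_{N}^{(g)}(\Delta_{00}-\Delta_{01})$ coincides with the paper's $H_{N+1}-G_{N+1}$. Your remark that column-stochasticity of $Q^{(l)}$ is essential is well taken; the paper's proof rests on exactly the same hypothesis through relation~\eqref{lemma:rel0}.
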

\begin{proof}
    Here, we use the important property (\ref{lemma:rel0}), and we can proceed to the following conversion
    \begin{align*}
             & \det\left(\left[\begin{array}{ c c }
                                       Q_{N}^{(g)} -\lambda I_{2^{N}} & Q_{N}^{(g)} -\lambda I_{2^{N}} \\
                                       G_{N+1}                        & H_{N+1} -\lambda I_{2^{N}}
                                   \end{array}\right]\right) =0      \\
        \iff & \det\left(\left[\begin{array}{ c c }
                                       Q_{N}^{(g)} -\lambda I_{2^{N}} & O_{2^{N}}                           \\
                                       G_{N+1}                        & H_{N+1} -G_{N+1} -\lambda I_{2^{N}}
                                   \end{array}\right]\right) =0.
    \end{align*}
    Here, ``$\iff$" denotes ``if and only if". Thus we can decompose the formula as
    \begin{align*}
             & \det\left(\lambda I_{2^{N}} -Q_{N}^{(g)}\right)\det(H_{N+1} -G_{N+1} -\lambda I_{2^{N}}) =0 \\
        \iff & \det\left(\lambda I_{2^{N}} -Q_{N}^{(g)}\right)\det\left(\lambda I_{2^{N}} -
        \left[\begin{array}{ c c }
                      \left(a_{10}^{10} -a_{10}^{00}\right) E_{N} & \left(a_{11}^{11} -a_{11}^{01}\right) F_{N} \\
                      \left(a_{10}^{10} -a_{10}^{00}\right) G_{N} & \left(a_{11}^{11} -a_{11}^{01}\right) H_{N}
                  \end{array}\right]\right) =0.
    \end{align*}
    Here,
    \begin{align*}
         & \left[\begin{array}{ c c }
                         \left(a_{10}^{10} -a_{10}^{00}\right) E_{N} & \left(a_{11}^{11} -a_{11}^{01}\right) F_{N} \\
                         \left(a_{10}^{10} -a_{10}^{00}\right) G_{N} & \left(a_{11}^{11} -a_{11}^{01}\right) H_{N}
                     \end{array}\right]                                                                         \\
         & =Q_{N}^{(g)}\left[\begin{array}{ c c }
                                     \left(a_{10}^{10} -a_{10}^{00}\right) I_{2^{N-1}} & O_{2^{N-1}}                                       \\
                                     O_{2^{N-1}}                                       & \left(a_{11}^{11} -a_{11}^{01}\right) I_{2^{N-1}}
                                 \end{array}\right].
    \end{align*}
    Then, we get the desired result.
\end{proof}
Next, we show the numerical results of the eigenvalues for the DK model. Figures \ref{fig:2}, \ref{fig:3}, and \ref{fig:4} show the histogram of the eigenvalues of $Q^{(g)}_N$ for oriented site percolation, oriented bond percolation, and $q=0$. The histograms represent the count of eigenvalues within each bin, where the bin size is $0.05\times0.05$ from $-1.00$ to $1.00$ for the real and imaginary parts. In Figures \ref{fig:2} and \ref{fig:3}, the eigenvalues are distributed mostly near the real axis, whereas in Figure \ref{fig:4}, they appear to spread equally in both directions of the real and imaginary parts, when $p$ is larger than $1/2$. In particular, when $p=1,\ q=0$, the model is essentially the same as Wolfram's Rule 90 \cite{wolfram2002new} and $Q^{(g)}_N$ becomes unitary, thus all the eigenvalues are on the unit circle.
\begin{figure}
    \begin{subfigure}[H]{0.5\textwidth}
        \centering
        \includegraphics[width=\figwidth\linewidth]{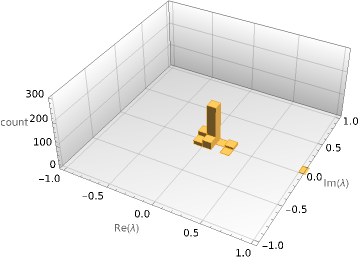}
        \caption{$p=\frac{1}{4},q=\frac{1}{4}$}
    \end{subfigure}
    \begin{subfigure}{0.5\textwidth}
        \centering
        \includegraphics[width=\figwidth\linewidth]{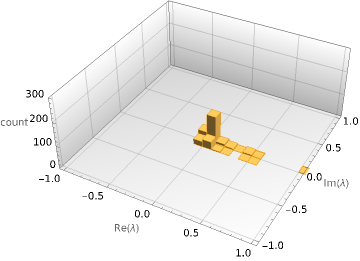}
        \caption{$p=\frac{1}{2},q=\frac{1}{2}$}
    \end{subfigure}
    \begin{subfigure}{0.5\textwidth}
        \centering
        \includegraphics[width=\figwidth\linewidth]{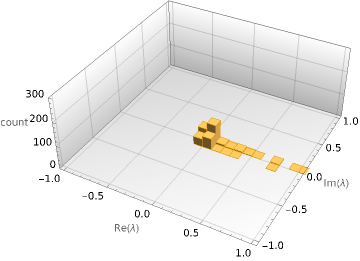}
        \caption{$p=\frac{3}{4},q=\frac{3}{4}$}
    \end{subfigure}
    \begin{subfigure}{0.5\textwidth}
        \centering
        \includegraphics[width=\figwidth\linewidth]{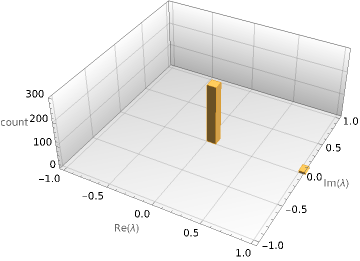}
        \caption{$p=1,q=1$}
    \end{subfigure}
    \caption{Figures show the histograms of the eigenvalues of $Q^{(g)}_N$ in the complex plain, for oriented site percolation $q=p$ with $N=8$.}
    \label{fig:2}
\end{figure}
\begin{figure}
    \begin{subfigure}[H]{0.5\textwidth}
        \centering
        \includegraphics[width=\figwidth\linewidth]{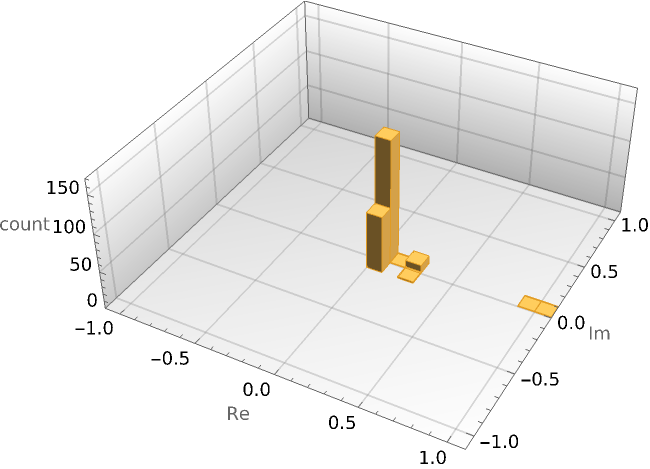}
        \caption{$p=\frac{1}{4},q=\frac{7}{16}$}
    \end{subfigure}
    \begin{subfigure}{0.5\textwidth}
        \centering
        \includegraphics[width=\figwidth\linewidth]{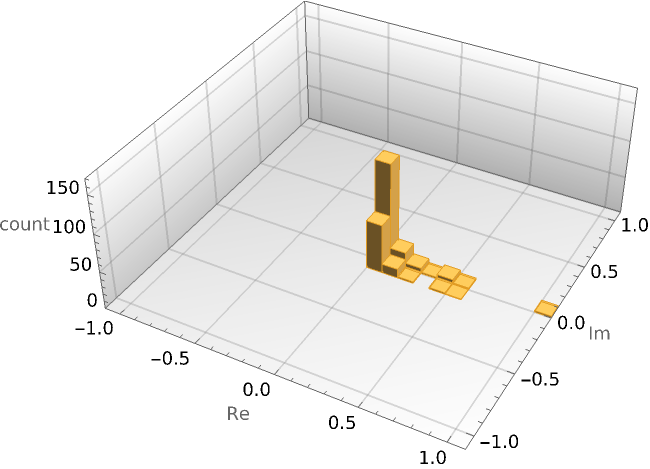}
        \caption{$p=\frac{1}{2},q=\frac{3}{4}$}
    \end{subfigure}
    \begin{subfigure}{0.5\textwidth}
        \centering
        \includegraphics[width=\figwidth\linewidth]{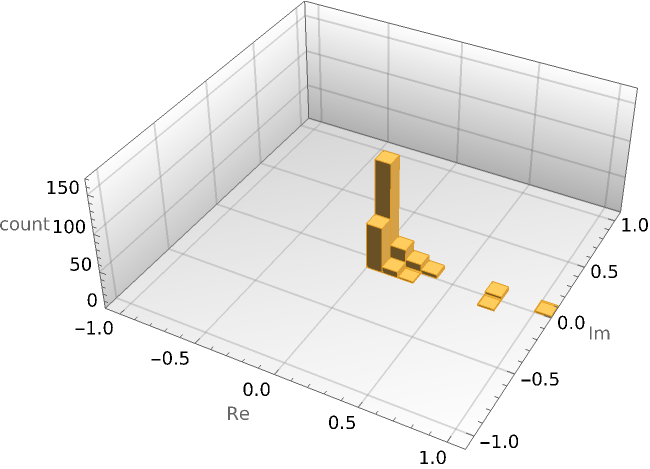}
        \caption{$p=\frac{3}{4},q=\frac{15}{16}$}
    \end{subfigure}
    \caption{Figures show the histograms of the eigenvalues of $Q^{(g)}_N$ in the complex plain, for oriented bond percolation $q=1-(1-p)^{2}$ with $N=8$.}
    \label{fig:3}
\end{figure}
\begin{figure}
    \begin{subfigure}[H]{0.5\textwidth}
        \centering
        \includegraphics[width=\figwidth\linewidth]{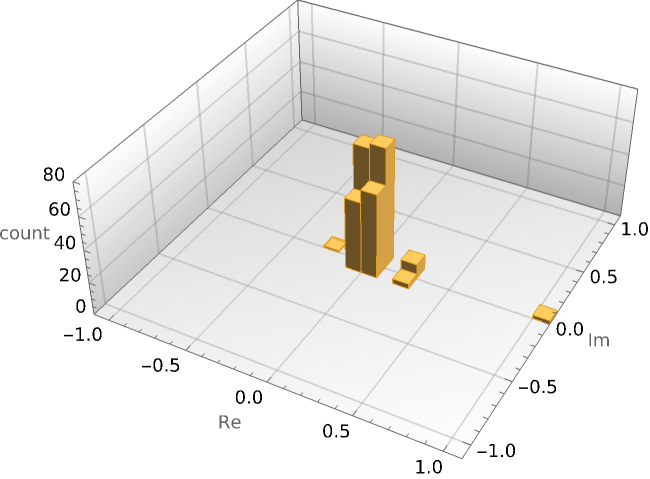}
        \caption{$p=\frac{1}{4},q=0$}
    \end{subfigure}
    \begin{subfigure}{0.5\textwidth}
        \centering
        \includegraphics[width=\figwidth\linewidth]{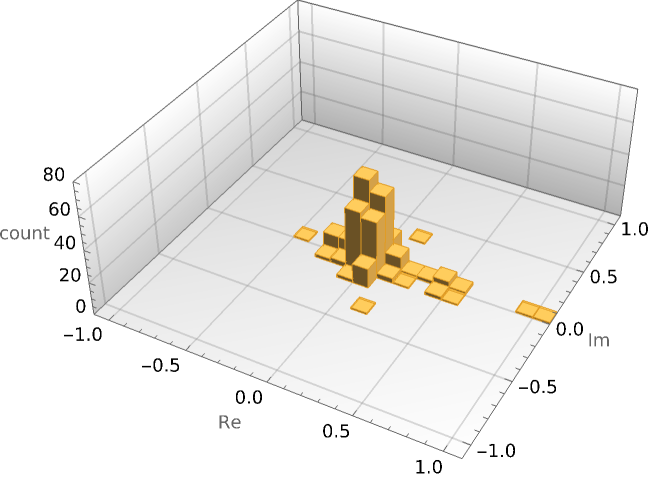}
        \caption{$p=\frac{1}{2},q=0$}
    \end{subfigure}
    \begin{subfigure}{0.5\textwidth}
        \centering
        \includegraphics[width=\figwidth\linewidth]{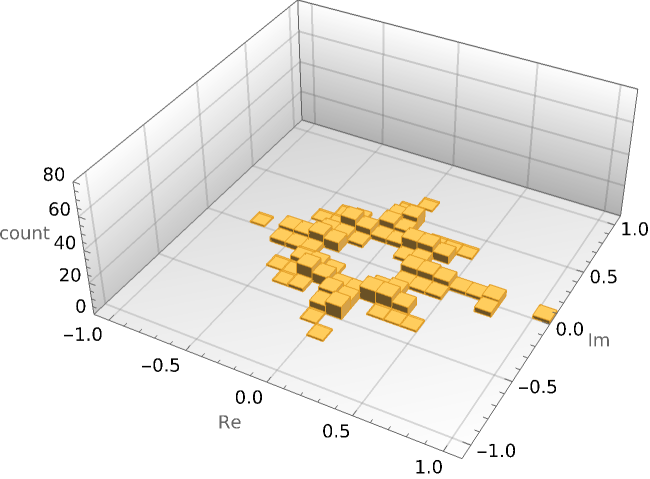}
        \caption{$p=\frac{3}{4},q=0$}
    \end{subfigure}
    \begin{subfigure}{0.5\textwidth}
        \centering
        \includegraphics[width=0.75\linewidth]{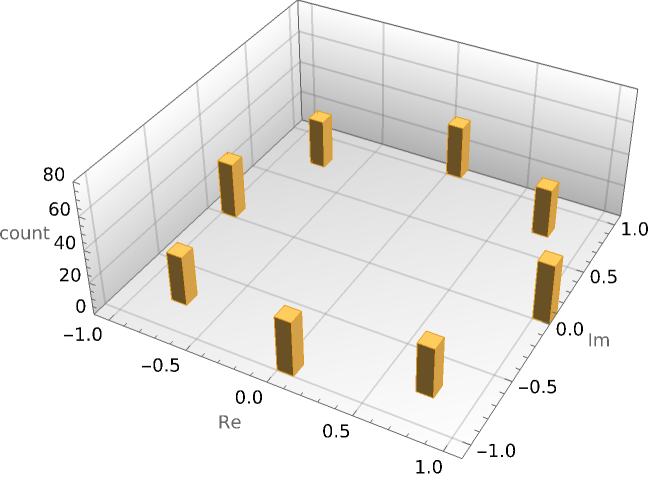}
        \caption{$p=1,q=0$}
    \end{subfigure}
    \caption{Figures show the histograms of the eigenvalues of $Q^{(g)}_N$ in the complex plain for $q=0$ with $N=8$.}
    \label{fig:4}
\end{figure}
\begin{theorem}
    \label{theo3}
    If $t=a_{10}^{10} -a_{10}^{00} =a_{11}^{11} -a_{11}^{01}$,
    \[
        C_{r}\left( Q^{(l)} ,\mathbb{P}_{N}\right) =\left(\frac{1+t^{r}}{2}\right)^{N-1}
    \]
    and
    \begin{align*}
        \log\left\{\overline{\zeta }\left( Q^{(l)} ,\mathbb{P}_{N} ,u\right)\right\} & =-E\left[\log\left( 1-t^{X_{N-1}} u\right)\right] ,
    \end{align*}
    where $X_{n}$ follows the symmetric binomial distribution:
    \[
        P( X_{n} =k) =\binom{n}{k}\left(\frac{1}{2}\right)^{k}\left(\frac{1}{2}\right)^{n-k} .
    \]
\end{theorem}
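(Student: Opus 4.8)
The plan is to read everything off the spectral description in Theorem~\ref{theo2}. First I would invoke its special case: under the hypothesis $t=a_{10}^{10}-a_{10}^{00}=a_{11}^{11}-a_{11}^{01}$, the global operator satisfies $\operatorname{Spec}\left(Q_N^{(g)}\right)=\bigcup_{k=0}^{N-1}\left[t^k\right]^{2\binom{N-1}{k}}$. Since $\operatorname{tr}\left(\left(Q_N^{(g)}\right)^r\right)$ equals the sum of the $r$-th powers of the eigenvalues counted with multiplicity, this gives
\[
    \operatorname{tr}\left(\left(Q_N^{(g)}\right)^r\right)=\sum_{k=0}^{N-1}2\binom{N-1}{k}t^{kr}=2\sum_{k=0}^{N-1}\binom{N-1}{k}\left(t^r\right)^k=2\left(1+t^r\right)^{N-1}
\]
by the binomial theorem. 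Combining this with the relation $C_r\left(Q^{(l)},\mathbb{P}_N\right)=\frac{1}{2^N}\operatorname{tr}\left(\left(Q_N^{(g)}\right)^r\right)$ obtained earlier from \eqref{zetadet} and \eqref{eq:log} yields $C_r=2^{-(N-1)}\left(1+t^r\right)^{N-1}=\left(\frac{1+t^r}{2}\right)^{N-1}$, which is the first assertion.

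For the zeta function I would rewrite $C_r$ as an expectation over the symmetric binomial law. Expanding again by the binomial theorem,
\[
    C_r=\left(\frac{1+t^r}{2}\right)^{N-1}=\sum_{k=0}^{N-1}\binom{N-1}{k}\left(\tfrac12\right)^{k}\left(\tfrac12\right)^{N-1-k}\left(t^k\right)^r=\sum_{k=0}^{N-1}P\left(X_{N-1}=k\right)\left(t^k\right)^r=E\!\left[\left(t^{X_{N-1}}\right)^r\right].
\]
Then, starting from the defining expansion $\log\left\{\overline{\zeta}\left(Q^{(l)},\mathbb{P}_N,u\right)\right\}=\sum_{r\geq1}\frac{C_r}{r}u^r$ together with the power series identity $\sum_{r\geq1}\frac{z^r}{r}=-\log(1-z)$, I would interchange the expectation with the sum over $r$:
\[
    \log\left\{\overline{\zeta}\left(Q^{(l)},\mathbb{P}_N,u\right)\right\}=\sum_{r=1}^{\infty}\frac{1}{r}E\!\left[\left(t^{X_{N-1}}u\right)^r\right]=E\!\left[\sum_{r=1}^{\infty}\frac{\left(t^{X_{N-1}}u\right)^r}{r}\right]=-E\!\left[\log\left(1-t^{X_{N-1}}u\right)\right].
\]

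Once Theorem~\ref{theo2} is in hand there is no essential obstacle; the remaining work is bookkeeping, and only a few minor points deserve care. The interchange of $E[\cdot]$ and $\sum_r$ is legitimate because $X_{N-1}$ takes only finitely many values, so $E[\cdot]$ is a finite convex combination of the series $\sum_r\frac{(t^k u)^r}{r}$; the resulting identity then holds as formal power series in $u$, and as an identity of analytic functions on the disc where $\left|t^{X_{N-1}}u\right|<1$ for every attained value of $X_{N-1}$ (in particular for $|u|<1$ whenever $|t|\leq1$, which covers the DK and PCA cases). Finally I would treat the degenerate case $t=0$ explicitly, reading $t^0=1$ and $t^k=0$ for $k\geq1$: then $C_r=\left(\tfrac12\right)^{N-1}$ for all $r$ and $-E\!\left[\log\left(1-t^{X_{N-1}}u\right)\right]=-\left(\tfrac12\right)^{N-1}\log(1-u)$, consistent with both formulas.
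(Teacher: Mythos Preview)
Your proof is correct and follows essentially the same approach as the paper: both invoke the spectral description from Theorem~\ref{theo2} to compute $\operatorname{tr}\left(\left(Q_N^{(g)}\right)^r\right)=2(1+t^r)^{N-1}$, deduce $C_r$, and then sum the series $\sum_r C_r u^r/r$ by exchanging the order of summation to recognize $-\sum_k P(X_{N-1}=k)\log(1-t^k u)$. Your additional remarks on the legitimacy of the interchange and the degenerate case $t=0$ are welcome but not present in the paper's proof.
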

\begin{proof}
    From Theorem \ref{theo2}, the trace of $\left( Q_{N}^{(g)}\right)^{r}$ can be calculated as follows:
    \begin{align*}
        \operatorname{tr}\left(\left( Q_{N}^{(g)}\right)^{r}\right) & =2\sum _{k=0}^{N-1}\binom{N-1}{k} t^{rk} =2\left( 1+t^{r}\right)^{N-1} .
    \end{align*}
    Thus, we get
    \begin{equation*}
        C_{r}\left( Q^{(l)} ,\mathbb{P}_{N}\right) =\left(\frac{1+t^{r}}{2}\right)^{N-1} .
    \end{equation*}
    Also, we have
    \begin{align*}
        \log\left\{\overline{\zeta }\left( Q^{(l)} ,\mathbb{P}_{N} ,u\right)\right\}
         & =\sum _{r=1}^{\infty }\left\{\frac{1}{2^{N}}\operatorname{tr}\left(\left( Q_{N}^{(g)}\right)^{r}\right)\right\}\frac{u^{r}}{r}                       \\
         & =\sum _{r=1}^{\infty }\sum _{k=0}^{N-1}\binom{N-1}{k}\left(\frac{1}{2}\right)^{N-1-k}\left(\frac{1}{2}\right)^{k} t^{rk}\frac{u^{r}}{r}              \\
         & =\sum _{k=0}^{N-1}\binom{N-1}{k}\left(\frac{1}{2}\right)^{N-1-k}\left(\frac{1}{2}\right)^{k}\sum _{r=1}^{\infty }\frac{\left( t^{k} u\right)^{r}}{r} \\
         & =-\sum _{k=0}^{N-1}\log\left( 1-t^{k} u\right) P( X_{N-1} =k)                                                                                        \\
         & =-E\left[\log\left( 1-t^{X_{N-1}} u\right)\right].
    \end{align*}
\end{proof}
\begin{remark}
    The previous study \cite{IPS} shows a similar result for $QCA$. The model is defined with the local operator
    \[
        Q_{QCA,1}^{(l)} (\xi ,\xi )=\left[\begin{array}{ c c c c }
                \cos \xi & \cdot    & -\sin \xi & \cdot     \\
                \cdot    & \cos \xi & \cdot     & -\sin \xi \\
                \sin \xi & \cdot    & \cos \xi  & \cdot     \\
                \cdot    & \sin \xi & \cdot     & \cos \xi
            \end{array}\right] .
    \]
    Then,
    \[
        \begin{aligned}
             & \log\left\{\overline{\zeta }\left(Q_{QCA,1}^{(l)} (\xi ,\xi ),\mathbb{P}_{N} ,u\right)\right\} =-E[\log\{1-\exp(iS_{N-1} \xi ) \cdot u\}],
        \end{aligned}
    \]
    where
    \[
        P(S_{n} =2k-n) =\frac{1}{2^{n}}\binom{n}{k}.
    \]
\end{remark}

\section{Summary}
\label{sec04}
In this paper, we focused on the analysis of the zeta function of IPS including the DK model as a special case. After giving the definition of the model and zeta function in Section \ref{sec02}, we investigated the eigenvalues and traces of the $global$ operator of the model in Section \ref{sec03}. Lemma \ref{lem} and Corollary \ref{cor} showed the important property of the $global$ operator, which is the key result of the subsequent discussions. Propositions \ref{prop1} and \ref{prop2} gave the two different solutions for the trace of $global$ operator. Then, we proved the relation of the eigenvalues in Theorem \ref{theo2}. For the case $t=a_{10}^{10} -a_{10}^{00} =a_{11}^{11} -a_{11}^{01}$, eigenvalues and the zeta function are explicitly derived in Theorem \ref{theo3}.
\section*{Acknowledgment}
This work was supported by JSPS KAKENHI Grant Number JP22J20050.
\bibliographystyle{style}
\bibliography{reference}
\end{document}